\newcommand{\SNR}{\Upgamma}
\newcommand{\setK}{ \mathcal{K}}
\newcommand{\state }{      \bm{\mathsf {S} } }
\newcommand{\proj }{      \mathsf{Proj} }
\newcommand{\loc }{      \bm{\ell} }
\newcommand{\om }{      \nu_{2}^{\texttt{OM}} }
\newcommand{\traj }{   \bm{\tau}^{\mathbf{b}, \bm{\lambda}}_{\texttt{MC}  } }
\newtheorem{Definition}{Definition}
\newtheorem{Lemma}{Lemma}
\newtheorem{Theorem}{Theorem}
\newtheorem{Scheme}{Scheme}
\newtheorem{Remark}{Remark}
\titlespacing\section{0pt}{10pt plus 4pt minus 2pt}{0pt plus 2pt minus 2pt}
\titlespacing\subsection{0pt}{9pt plus 4pt minus 2pt}{0pt plus 2pt minus 2pt}
\titlespacing\subsubsection{0pt}{9pt plus 4pt minus 2pt}{0pt plus 2pt minus 2pt}
\begin{document}
\title{ 
Sensing-Assisted Adaptive Channel Contention for Mobile Delay-Sensitive Communications
}

\author{
Bojie Lv, Qianren Li and Rui Wang \\
Southern University of Science and Technology

}
\maketitle
%We use the following notation t

\begin{abstract}
This paper proposes an adaptive channel contention mechanism to optimize the queuing performance of a distributed millimeter wave (mmWave) uplink system with the capability of environment and mobility sensing. The mobile agents determine their back-off timer parameters according to their local knowledge of the uplink queue lengths, channel quality, and future channel statistics, where the channel prediction relies on the environment and mobility sensing. The optimization of queuing performance with this adaptive channel contention mechanism is formulated as a decentralized multi-agent Markov decision process (MDP). Although the channel contention actions are determined locally at the mobile agents, the optimization of local channel contention policies of all mobile agents is conducted in a centralized manner according to the system statistics before the scheduling. In the solution, the local policies are approximated by analytical models, and the optimization of their parameters becomes a stochastic optimization problem along an adaptive Markov chain. An unbiased gradient estimation is proposed so that the local policies can be optimized efficiently via the stochastic gradient descent method. It is demonstrated by simulation that the proposed gradient estimation is significantly more efficient in optimization than the existing methods, e.g., simultaneous perturbation stochastic approximation (SPSA).

\end{abstract}

\section{Introduction}

Distributed channel access mechanisms, such as carrier sense multiple access (CSMA), are widely utilized in wireless systems, such as wireless fidelity (WiFi) and Zigbee, due to their low signaling overhead. However, they may suffer from low scheduling efficiency, particularly for mobile delay-sensitive communications. This is because the transmission scheduling cannot be promptly adapted to the global queuing and channel states. This paper will exploit the wireless sensing and adaptive channel contention design to relieve the above issue.

Extensive studies have been devoted to the centralized delay-sensitive scheduling design using the Markov decision process (MDP) \cite{RuiWang2013, RuiWang-2011TSP}, or Lyapunov optimization \cite{Neely2006TIT}. However, these centralized approaches might not directly apply to the WiFi-like distributed channel access mechanism.
Recently, several research efforts have focused on developing adaptive channel contention mechanisms. For example, the authors in \cite{Modiano_Infocom2023} modified the back-off timer parameters to replicate the behavior of the near-optimal centralized policy, such that the weighted sum of age-of-information (AoI) in a single-hop wireless network could be suppressed. Similarly, the works \cite{Walrand_2010_ToN,Jiang2012-TIT,Srikant_2012ToN} replicated the behavior of near-optimal centralized policies by adjusting the CSMA back-off timer parameters according to the present network state, such as queue lengths. It was shown that the performance of centralized scheduling schemes could be approached with a certain probability with these adaptive CSMA mechanisms. However, the above works did not address the distributed channel contention design from the optimization point of view. Moreover, the impact of mobility on delay-sensitive communications was not investigated in these works. 

The mobility of mobile agents might lead to unexpected channel quality fluctuation from the perspective of queuing performance. Fortunately, wireless sensing technology can predict wireless channel fluctuations due to mobility.
For instance, authors in \cite{li2022indoor,Xinyu_Zhang_NSDI_2017} demonstrated that mmWave communication transceivers can sense the indoor layout, enabling signal strength prediction in arbitrary locations. Additionally, researchers in \cite{yu2023mmalert} employed passive sensing technology to predicate the LoS link blockage during mmWave communication.
However, there is still no existing work on the sensing-assisted channel contention mechanism design.

This paper will shed some light on the above issues by proposing an optimization framework for adaptive channel contention in a sensing-enabled mmWave mobile system. Remarkably, several mobile agents contend the uplink channel according to their local knowledge of the queue lengths and locations. Moreover, the environment and mobility sensing could enable mobile agents to estimate the current channel quality and predict future channel statistics. The joint optimization of local channel contention policies of mobile agents is formulated as a decentralized multi-agent MDP, with the objective of the average system queuing performance. A novel stochastic gradient descent (SGD) method is then proposed to optimize the local policies according to the system statistics before scheduling. Finally, the simulation demonstrates the superior efficiency and performance of the proposed SGD method compared to the existing benchmarks.

This paper uses the following notations: Bold lowercase $\mathbf{a}$ represents column vectors.
Bold uppercase $\mathbf{A}$ represents matrices.
Non-bold letters such as $a$ and $A$ denote scalar values.
The letters $\mathcal{A}$ and $\mathscr{A}$ represent sets.
The magnitude of a scalar is denoted by $\left|a\right|$.
$(a)^{+}$ denotes $\max(0,a)$.
$\left[\mathbf{A}\right]_{i,j}$ represents the $(i,j)$-th element of matrix $\mathbf{A}$.
$\mathbf{A}^\top$ denotes the transpose of matrix $\mathbf{A}$.
$\mathbf{A}^{\mathsf{H}}$ denotes the conjugate transpose of matrix $\mathbf{A}$.  $(a_{j})_{j}$ with $j\in \mathcal{J}$ denotes the column vector whose entries' indexes take values from sets $\mathcal{J}$ in ascending order.
$\mathds{1}\left[.\right]$ represents an indicator function, which takes the value $1$ when the event is true and $0$ otherwise.

\section{System Model}\label{Sec:System_Model}

\begin{figure}[t]
	\centering
	\includegraphics[height=125pt,width=225pt]{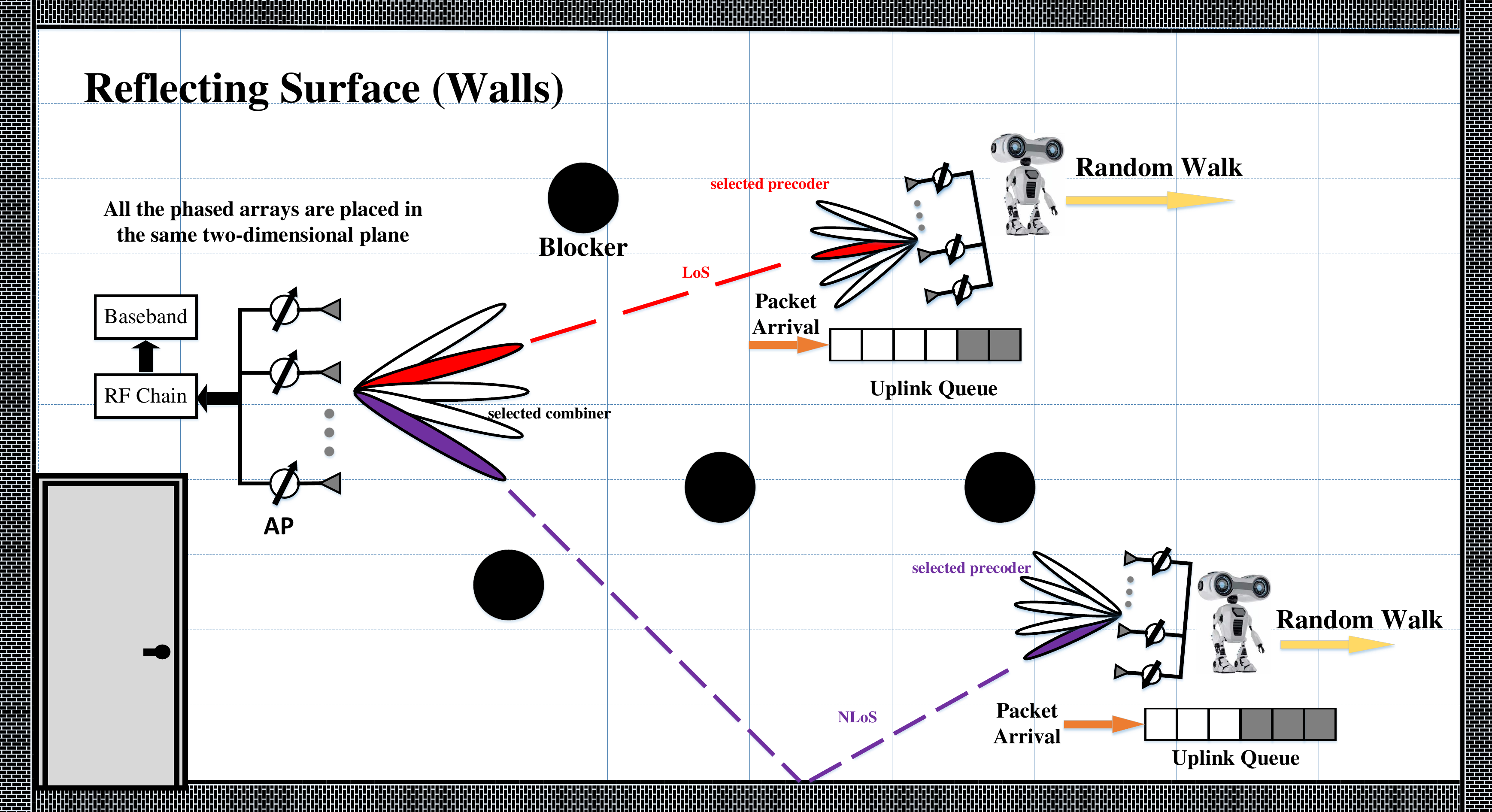}
	\caption{Illustration of uplink communication scenarios: mobile agents are navigating in a room with furniture, pillars as signal blockers; the walls are the main scattering clusters.}
	\label{fig:system_model}
 	\vspace{-0.6cm}
\end{figure}

The uplink transmission scheduling of a WiFi-like mmWave communication system with distributed channel contention is investigated. The system consists of an access point (AP) and $K$ mobile agents, denoted as $\setK \triangleq \{1,2,\ldots,K \}$. These mobile agents move randomly within a two-dimensional plane $\mathscr{S}$, having random packet arrivals for uplink transmission. The mmWave communications are sensitive to the link blockage. Significant channel fluctuation could degrade the delay-sensitive communications. Fortunately, the high-resolution environment sensing \cite{Xinyu_Zhang_NSDI_2017,li2022indoor} and localization \cite{Tsai_2020} provided by mmWave signals can relieve the above issue. Conventional distributed channel contention mechanisms, e.g., CSMA, fail to exploit the sensing information. In this paper, the mobile agents are designed to determine their uplink channel contention strategies distributively and respectively according to local channel status, future channel prediction, and data payload, such that the mobile agents with urgent data would have higher priority in channel contention and the signaling overhead of the centralized scheduling design can be eliminated. In the following, the mobility model, channel contention model, uplink channel and transmission model, as well as the uplink queuing model are elaborated respectively.

\subsection{Mobility Model}

To model the mobility, the possible locations of mobile agents are quantized into a set of discrete location points, as illustrated in Fig. \ref{fig:system_model}. Let $\mathcal{L}\triangleq \left\{ \loc^{(1)}, \loc^{(2)}, \dots, \loc^{\left|\mathcal{L}\right|}\right\}$ with $\loc^{(i)}\in \mathscr{S}$ ($\forall i \in \left\{ 1,2,\ldots, \left|\mathcal{L}\right| \right \} $) represent the coordinates of these location points. The uplink transmission time is organized as a sequence of slots. The locations of mobile agents are quasi-static in one slot and change randomly in the next slot. Denote $\loc _{t,k}^{\texttt A} \in \mathcal{L}$ as the location of the $k$-th mobile agent in the $t$-th slot. It is assumed that the mobility patterns of agents adhere to a time-invariant Markov process, characterized by the following transition probabilities
$
\mathds{P}\left[{\bm \ell}_{t+1,k}^{\texttt A}=\loc^{(j)}\Big|{\bm \ell}_{t,k}^{\texttt A}=\loc^{(i)} \right]= \left[\mathbf{P}_k^{ \texttt{Loc}}\right ]_{ij}$, 
where $\mathbf{P}_k^{ \texttt{Loc}}\in \mathbb{R}^{|\mathcal{L}|\times |\mathcal{L}|}$ is refereed to as the location transition matrix of the $k$-th mobile agent. The mobile agents can detect their locations via the existing sensing techniques \cite{yu2023mmalert}. Moreover, before scheduling the uplink transmission, the location transition matrices for all mobile agents can be estimated at the AP using methods outlined in \cite{Nicholson-Mobicom-08}.

The above motion of mobile agents results in the fluctuation of uplink transmission. The analog multiple-input multiple-output (MIMO) transceivers, which consist of a single radio frequency (RF) chain and a half-wavelength uniform linear phased array (ULA), are deployed at both the AP and the mobile agents. The ULA at the AP and the mobile agents are with $N_{\texttt{R}}$ and $N_{\texttt{T}}$ antenna elements, respectively. Consequently, the AP and the mobile agents can adapt to align their analog receive and transmit beams. Due to the obstacles in the channel, LoS paths may not be available when the mobile agents move to some locations of the plane as illustrated in Fig. \ref{fig:system_model}, and NLoS paths via scattering clusters can be aligned in this case. The absence of LoS paths results in a lower signal-to-noise ratio (SNR). 

\subsection{Distributed Channel Contention Model}

For the elaboration convenience, the idealized channel contention model widely used in the existing literature  \cite{Modiano_Infocom2023,Walrand_2010_ToN} is adopted in this paper to determine the transmission mobile agent for each slot. Particularly, all the mobile agents sense the channel availability for a random duration since the very beginning of the transmission slot (i.e., random back-off). The random durations of the mobile agents follow independent exponential distributions with heterogeneous expectations. The mobile agent with the shortest sensing duration will access the channel in the current slot by signaling immediately after the sensing duration. Omnidirectional antennas or sub-$6$GHz bands are used in signaling such that all other mobile agents can detect the channel occupation and suspend the channel access in the current slot. Moreover, it is assumed that the channel contention overhead is negligible compared to the remaining data transmission time of the slot.

The mean back-off time of the $k$-th mobile agent in the $t$-th slot is denoted as $1/\theta^{\texttt{A}}_{t,k}$, where $\theta^{\texttt{A}}_{t,k}\in
\mathfrak{I}\triangleq [\theta_{\texttt{min}},\theta_{\texttt{max}}]$  is referred to as the  back-off timer parameter. Consequently, it is proved in \cite{Modiano_Infocom2023} that the probability of the $k$-th mobile agent obtaining the access to the $t$-th slot can be expressed as
$
\upeta_{k}\left(\bm{\theta}^{\texttt{A}}_{t} \right)=\frac{\theta^{\texttt{A}}_{t,k}}{\sum_{\kappa \in \setK} \theta_{t,\kappa}^{\texttt{A}}}
$,
where $\bm{\theta}^{\texttt{A}}_{t} \triangleq (\theta_{t,k}^{\texttt{A}})_{k \in \setK}$. In this paper, the back-off timer parameters for all the slots and mobile agents, $(\theta_{t,k}^{\texttt{A}})_{t\in\mathbb{N},k \in \setK}$, will be optimized according to the urgency of data payload as well as the current channel status and future channel statistics. Thus, different packet arrivals or mobility patterns of mobile agents would lead to different back-off timer parameters.

\subsection{Uplink Channel and Transmission Model}
\label{section:channel_model}

A geometry-based channel model is adopted to characterize the location-dependent uplink channel considered in this paper. As illustrated in Fig. \ref{fig:system_model}, signal blockers and scattering clusters within the system coverage remain static. It has been demonstrated by the experiments \cite{Xinyu_Zhang_NSDI_2017,li2022indoor} that in such an environment, the number of paths, average power gains, and propagation path directions from any location to the AP exhibit temporal consistency. Hence, we define these parameters as functions of location in the following manner. Let $N_{\texttt {p}}(\loc )$ represent the number of paths from a given location $\loc\in \mathcal{L}$ to the AP,  $\theta_i(\loc)$ and $\phi_i({\bm \ell})$ denote the Angle-of-Arrival (AoA) and Angle-of-Departure (AoD) of the $i$-th path, $\upsigma^2_i({\bm \ell})$ signify its average power gain. The channel matrix $\mathbf{H}_{t,k}\in\mathbb{C}^{N{\texttt{R}}\times N_{\texttt{T}}}$ from the $k$-th mobile agent at the location $\loc_{t,k}^{\texttt{A}}$ to the AP in the $t$-th slot can be expressed as
\begin{small}
\begin{align*}
\mathbf{H}_{t,k}= \sqrt{N_{\texttt{T}}N_{\texttt{R}}} \!\!
\sum_{i=1}^{N_{\texttt {p}}(\loc_{t,k}^{\texttt{A}} )} \!\!\!\!
\alpha_{t,k,i}
\mathbf{a}_{\texttt{R}}\left( \phi_i \left(\loc^{\texttt A}_{t,k}  \right )  \right)
\mathbf{a}^{\mathsf{H}}_{\texttt{T}} \left( \theta_i \left(\loc^{\texttt A}_{t,k}  \right )  \right),
\end{align*}
\end{small}%
where $\alpha_{t,k,i}$ denotes the instantaneous path gain obeying a complex Gaussian distribution
$\alpha_{t,k,i} \sim \mathcal{CN} \left(0,\upsigma_{i}^2  \left( \loc^{\texttt A}_{t,k} \right)   \right)$. The instantaneous path gains remain quasi-static in one slot, and change independently in the next slot. Furthermore, the normalized array response vectors of ULAs at the mobile agent and the AP can be expressed as
\begin{small}
\begin{align*}
\mathbf{a}_{\texttt{T}}(\theta) &=\frac{1}{\sqrt{N_{\texttt{T}}}}\left[1,e^{-j\pi\sin(\theta)},\ldots,e^{-j\pi(N_{\texttt{T}}-1)\sin(\theta)}\right]^{\top},\\
\mathbf{a}_{\texttt{R}}(\phi) &=\frac{1}{\sqrt{N_{\texttt{R}}}}\left[1,e^{-j\pi\sin(\phi)},\ldots,e^{-j\pi(N_{\texttt{R}}-1)\sin(\phi)}\right]^{\top},
\end{align*}
\end{small}%
where $\theta$ and $\phi$ represent AoA and AoD respectively.

Let $\mathbf{w}_{t,k}\in\mathbb{C}^{N{\texttt{R}}\times 1}$ and $\mathbf{f}_{t,k}\in\mathbb{C}^{N{\texttt{T}}\times 1}$ denote the analog combiner and precoder employed for the uplink communication of the $k$-th mobile agent (if it wins the channel contention) in the $t$-th slot, the baseband power gain of this mobile agent can be expressed as
 $
 \SNR_{t,k} = {
 	\frac{ 1}{\sigma_{\texttt N}^2}	\left|\mathbf{w}_{t,k}^\mathsf{H}\mathbf{H}_{t,k}\mathbf{f}_{t,k}\right|^2}
$,
 where  $\sigma_\texttt{N}^2$ denotes the average noise power. In practice, the main lobe directions of $\mathbf{w}_{t,k}$ and $\mathbf{f}_{t,k}$ ($\forall t, k$) are not continuously adjustable. Instead, it is assumed that they are chosen from the following predefined candidate sets
 \begin{small}
 \begin{align}
 \mathbf{w}_{t,k}&\in\Big\{\mathbf{a}_{\texttt{R}}(\phi_{q}):\ q=1,2,\ldots,N_{\texttt{R}}\Big\}\triangleq\mathcal{W}, \\
 \mathbf{f}_{t,k}&\in\Big\{\mathbf{a}_{\texttt{T}}(\theta_{p}):\ p=1,2,\ldots,N_{\texttt{T}}\Big\}\triangleq\mathcal{F},
 \end{align}
 \end{small}%
 where
 $
 \phi_{q}=\arcsin\left( \frac{2(q-1)}{N_{\texttt{R}}}-1 \right)
 ,\label{eqn:combiner-angle}
 \theta_{p}=\arcsin\left( \frac{2(p-1)}{N_{\texttt{T}}}-1 \right)
 \label{eqn:procoder-angle}
 $. 
Furthermore, let $P_{\texttt{UL}}$ be the uplink transmit power. The  number of information bits transmitted from the $k$-th mobile agent to the AP in the $t$-th time slot (if it wins the channel contention) can be expressed as
$
R_{t,k}= T_{\texttt{slot}}W \log_{2}(1+ P_{\texttt{UL}} \SNR_{t,k})$,
where $W$ denotes the bandwidth, and $T_{\texttt{slot}}$ is the slot duration.

Conventional designs of the analog precoder and combiner  rely on the estimation of complete channel matrices $\left(\mathbf{H}_{t,k}\right)_{t,k}$, which can incur significant overhead, particularly with analog MIMO architecture. To mitigate this overhead, we shall rely only on the prior knowledge of the channel's geometric statistics, including  
$
 \left   \{N_{\texttt{p}}({\bm \ell}), \upsigma^2_i({\bm \ell}), \theta_i({\bm \ell}), \phi_i({\bm \ell}) : \ {\bm \ell} \in \mathcal{L},\ i = 1, 2, ..., N_{\texttt{p}}({\bm \ell}) \right \}
$.
It is shown in the experiments conducted in \cite{li2022indoor, Xinyu_Zhang_NSDI_2017} that these geometric statistics can be sensed via mmWave communication systems. Particularly, the following beamforming scheme based on the knowledge of geometric channel statistics and mobile agents' locations is adopted in this paper:
 	\vspace{-0.1cm}
 \begin{Scheme}[Location-Aided Beam Alignment]\label{Sch:Opt_LBA}
 Exploiting the location knowledge of mobile agents $\left(\loc_{t,k}^{\texttt{A}} \right) _{k \in \setK}$, the analog precoders and combiners for each agent-AP link are determined by the following optimization problem ($\forall t \in \mathbb{N},\ k \in \setK$)
 \begin{small}
 	\begin{align}	
 ( \mathbf{f}_{t,k},\    \mathbf{w}_{t,k}       )&=  \!\!\!\!\!\!
 	\mathop{\arg	\max}_{ \mathbf{f}_{t,k}\in\mathcal{F},\    \mathbf{w}_{t,k} \in \mathcal{W}   } \!\!\!\!\!\!\!
 	\mathds{E}_{\mathbf{H}_{t,k}}  \!\! \left[ 
 	\log_{2} \! \left( \! 1 \! +\!  \! \frac{  P_{\texttt{UL}}}{\sigma_{\texttt N}^2}	\left|\mathbf{w}_{t,k}^\mathsf{H}\mathbf{H}_{t,k}\mathbf{f}_{t,k}\right|^2 \! \right) \!
 	\right].\nonumber
 	\end{align}
  \end{small}%
 \end{Scheme}

\vspace{-0.1cm}
\subsection{System Queue Dynamics}
Each mobile agent is equipped with a dedicated uplink transmission queue, as depicted in Figure \ref{fig:system_model}. The arrival data at the mobile agents is organized into packets, each packet with $R_{\texttt{pac}}$ information bits. It is assumed that the number of arrival packets at the $k$-th mobile agent in the $t$-th slot, denoted as $A_{t,k}$, follows an independent Poisson distribution with an expectation value of $\bar{A}_{k}$, i.e.,
$
\mathds{P}[A_{t,k}=N]=
\frac{\left(\bar{A}_{k} \right)^N}{ N!}
e^{-\bar{A}_{k}}$.
 To streamline the analysis, we assume that all packets consistently arrive at the end of each time slot as in \cite{Neely2006}. 

Because of the randomness in the channel contention, the number of departure packets from the queue of the $k$-th mobile agent in the $t$-th slot is then given by
\begin{small}
\begin{align}
D_{t,k}=
\begin{cases}
\left\lfloor\frac{   R_{t,k} }{ R_{\texttt{pac}}}\right\rfloor, \quad &\textsf{with probability} \quad \upeta_{k}(\bm{\theta}^{\texttt{A}}_{t}) \\
0,\quad  &\textsf{with probability} \quad 1-\upeta_{k}(\bm{\theta}^{\texttt{A}}_{t})
\end{cases}
\end{align}
\end{small}%
Hence, let ${Q}_{\texttt{max}}$ denote the maximum buffer size, measured in terms of packets, for each uplink queue, $Q_{t,k}\in \mathcal{Q} \triangleq \{0,1,\ldots,{Q}_{\texttt{max}} \}$
represent the queue length of the $k$-th mobile agent at the beginning of the $t$-th slot. The dynamics of the uplink queues can be expressed as 
$
Q_{t+1,k}=
\min \left \{
\left( Q_{t,k}-D_{t,k} \right)^{+}+A_{t,k}, \ Q_{\texttt{max}} \right \},  \ \forall k \in \setK,
$
where arrival packets will be dropped if the uplink queue buffer is full.
\vspace{-0.2cm}
\section{Proposed Distributed Channel Contention Framework and Problem Formulation} \label{Sec:Problem_Formulation}

\begin{figure}[t]
	\centering
	\includegraphics[height=100pt,width=225pt]{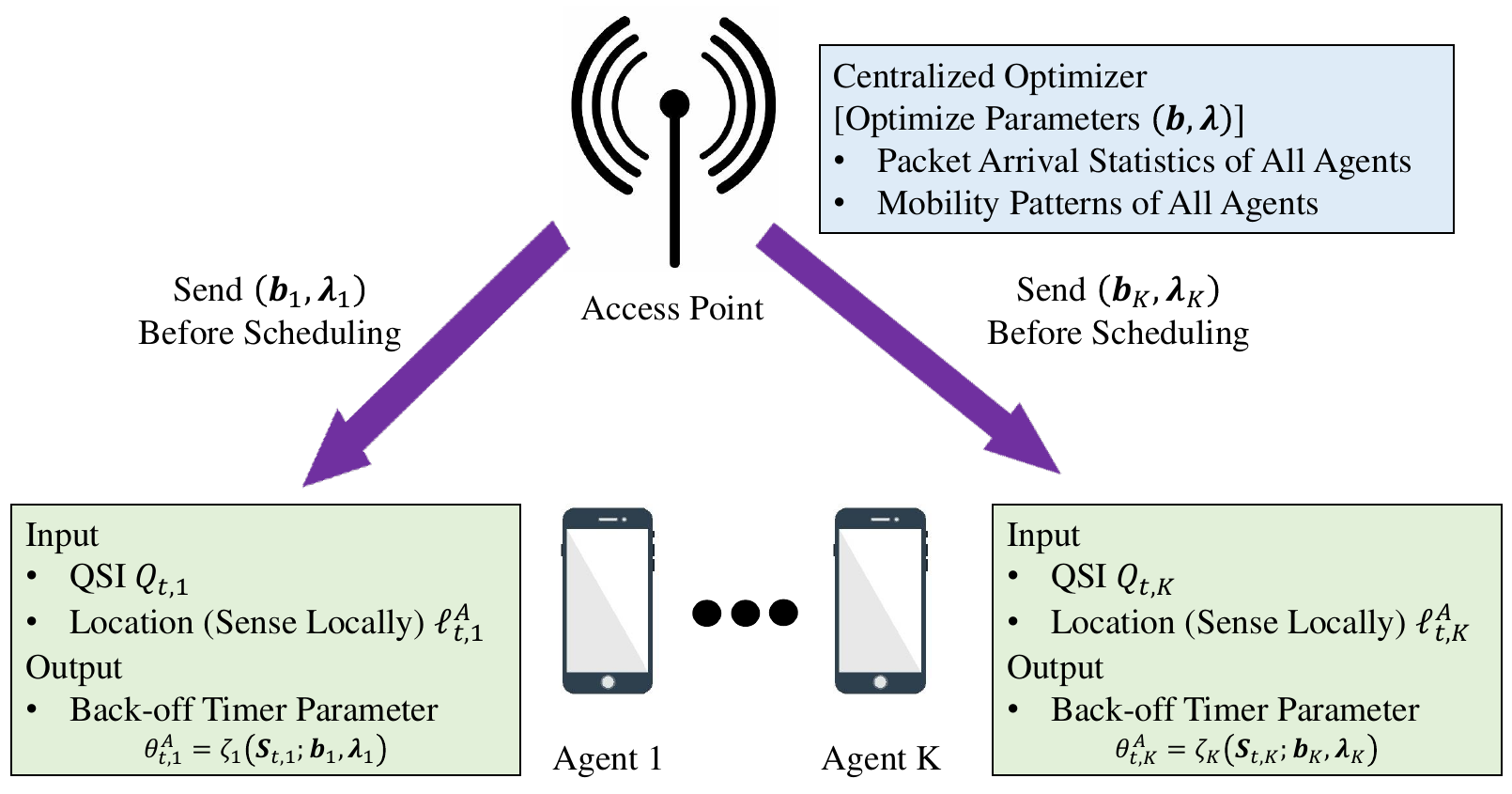}
	\caption{Illustration of proposed framework: centralized optimization
of distributed channel contention.}
	\label{fig:CTDE}
 	\vspace{-0.7cm}
\end{figure}

In the proposed distributed channel contention framework, each mobile agent (say the $k$-th mobile agent in the $t$-th slot) determines its back-off timer parameter $\theta^{\texttt{A}}_{t,k}$ based on its local system state, including local queue length $Q_{t,k}$ and location ${\bm \ell}_{t,k}^{\texttt A}$. Moreover, the optimization of the mapping from the local system states to their corresponding back-off timer parameters is conducted in a centralized manner at the AP beforehand based on the system statistics, including the channel statistics, queue dynamics and mobility patterns. 

Notice that given the above local mapping, the status of the considered mmWave communication system evolves as a Markov chain. In this section, we formulate the mapping optimization as a multi-agent decentralized MDP. First, the system states and local scheduling policies are defined as follows.
\vspace{-0.1cm}
\begin{Definition}[Local and Global System State]
		In the $t$-th slot ($\forall t \in \mathbb{N}$), the local system state of the mobile agent $k$ is characterized by the aggregation of its location and queuing state information (QSI), denoted as $
		\state_{t,k}\triangleq 
		\left( {\bm \ell}_{t,k}^{\texttt A}, \ Q_{t,k} \right)\in \mathcal{S} \triangleq \mathcal{L}\times
		\mathcal{Q}$
		;
the global system state is represented by the aggregation of all local system states, denoted as $\state_{t}\triangleq 
	\left( 	\state_{t,1}, 	\state_{t,2}, \ldots   , \state_{t,K} \right)\in  \mathcal{S}^{K} 
	$.
		
\end{Definition}
	\vspace{-0.3cm}

\begin{Definition}[Local Scheduling Policy]
 In the $t$-th slot ($\forall t \in \mathbb{N}$), the local scheduling policy of the mobile agent $k$ is a mapping from the local system state $\bm{\mathsf{S}}_{t,k}$ to the back-off timer parameter $\theta^{\texttt{A}}_{t,k}$. Thus,
		$
		\zeta_{k} \left( \state_{t,k} \right)=
		 \theta^{\texttt{A}}_{t,k}\in  \mathfrak{I}
		$.
\end{Definition}

Due to the huge global state and action spaces of the system, the optimal local policies for all the mobile agents is prohibitive. In this paper, we shall restrict the local policies to analytical models, such that the optimization can be accelerated and performance analysis can be facilitated. Particularly, let $\ell$ be the index of location point with $	 {\bm \ell}_{t,k}^{\texttt A}={\loc}^{ (\ell)}$, $b_{k,\ell}$ and $\lambda_{k,\ell}$ be the scheduling parameters for the $k$-th mobile agent at the location ${\bm \ell}^{(\ell)}$, $\mathbf{b}_{k}\triangleq (b_{k,i})_{i=1}^{ \left| \mathcal{L} \right | } \in \mathbb{R}^{  |\mathcal{L}| }_{+} $, and $\bm{\lambda}_{k}\triangleq ({\lambda}_{k,i})_{i=1}^{ \left| \mathcal{L} \right | }  \in \mathbb{R}^{  |\mathcal{L}| }_{+} $. The local scheduling policies are restricted to  
\begin{small}
\begin{align}
		 \theta^{\texttt{A}}_{t,k} = \zeta_{k} \left( \state_{t,k};\mathbf{b}_{k},\bm{\lambda}_{k} \right) \triangleq \proj_{\mathfrak{I}} ( b_{k,\ell}+ \lambda_{k,\ell}Q_{t,k} )\in  \mathfrak{I}, \label{eqn:policy_model}
\end{align}
\end{small}%
where the operator $\proj_{\mathfrak{I}}$ is defined as 
\begin{small}
    		\begin{align}
    		 \proj_{\mathfrak{I}}(x)\triangleq \max \left (\min \left(x, \theta_{\texttt{max}}  \right),  \theta_{\texttt{min}}   \right ), \ \forall x\in \mathbb{R}_{+}.
    		\end{align}
    		\end{small}%

Let $\mathbf{b}\triangleq \left ( \mathbf{b}_1 ^{\top} ,  \mathbf{b}_2 ^{\top},\ldots , \mathbf{b}_K ^{\top}  \right)^{\top}$ and $\bm{\lambda}\triangleq \left ( \bm{\lambda}_1 ^{\top} ,  \bm{\lambda}_2 ^{\top},\ldots , \bm{\lambda}_K ^{\top}  \right)^{\top}
$  be the aggregation vectors of local scheduling parameters.  The transition probabilities of the global system states can be represented as
\begin{small}
\begin{align}
\mathds{P}[ \state_{t+1}| \state_{t},\mathbf{b},\bm{\lambda} ]=&\prod_{k\in \setK} \mathds{P}[ \loc_{t+1,k}^{\texttt{A}}| \loc_{t,k}^{\texttt{A}} ] \mathds{P}[ Q_{t+1,k}| \state_{t},\mathbf{b},\bm{\lambda} ]. \label{eqn:GobalStateTransitionProb}
\end{align}
\end{small}%

According to Little's law, the average delay of uplink transmission is proportional to the average number of packets in the system \cite{1961A}. Hence, we first define the instantaneous system cost of the $t$-th slot given the system state $\state_{t}$ as the summation of the number of uplink packets buffered at mobile agents and the full buffer penalty as follows:
\begin{small}
\begin{align}\label{eqn:frame_level_cost}
\mathsf{c}_{\texttt{GS}} \left( 
\state_{t}
\right)
\triangleq
\sum_{k \in \setK} \mathsf{c}_{\texttt{LS}} \left( 
\state_{t,k}
\right),
\end{align}
\end{small}%
where $
\mathsf{c}_{\texttt{LS}} \left( 
\state_{t,k}
\right)\triangleq Q_{t,k}+w_{\texttt B}\mathds{1}[	Q_{t,k}=Q_{\texttt{max}}]
$
denotes the local cost of $k$-th mobile agent in the $t$-th slot, and $w_{\texttt B}$ denotes the weight of full buffer penalty. Then, the overall system cost is defined as the discounted summation of average system cost in all the slots. Thus,
\begin{small}
\begin{align}\label{eqn:obj}
\bar{\mathsf{G}}\left ( \state_{0};\mathbf{b},\bm{\lambda} \right)
\triangleq
\mathds{E}_{ (\bm{\mathsf{S}}_{t} )_{t\in \mathbb{N}} }
\left[
\sum_{t=0}^{+\infty}
\gamma^{t} \mathsf{c}_{\texttt{GS}}   \left( 
\state_{t}
\right)
\bigg|
\state_{0}, \mathbf{b},\bm{\lambda}
\right].
\end{align}
\end{small}%
As a result, the scheduling of the uplink transmission in this paper can be formulated as the following multi-agent MDP:
\begin{small}
\begin{align}
\textsf{P1}:\ 
(\mathbf{b}^{\star}, \bm{\lambda}^{\star})
\triangleq&\mathop{\arg\min}_{
	\mathbf{b}, \bm{\lambda}
} 
\bar{\mathsf{G}}\left(\state_{0}; \mathbf{b}, \bm{\lambda}  \right ).
\end{align}
\end{small}%
As illustrated in Fig. \ref{fig:CTDE},  the optimization of parameters $(\mathbf{b}, \bm{\lambda})$ is centralized. After optimization, each agent receives its personalized set
of optimized parameters $(\mathbf{b}_{k}, \bm{\lambda}_{k})$ and establishes a local
policy $ \zeta_{k} \left( \state_{t,k};\mathbf{b}_{k},\bm{\lambda}_{k} \right)$.

Given the local scheduling parameters $ \left(\mathbf{b}, \bm{\lambda} \right) $, the global system states transit according to the probabilities defined in \eqref{eqn:GobalStateTransitionProb}. Different values of $ \left(\mathbf{b}, \bm{\lambda} \right) $ would lead to different transition probabilities, and hence different overall system costs. The optimization on them is feasible as long as the knowledge on the \eqref{eqn:GobalStateTransitionProb} is known. However, problem  $\textsf{P1}$ is still challenging. On the one hand, the gradient of the objective depends on the transition probabilities of all system states, which is prohibitive in calculation due to the massive space of global system states. On the other hand, the efficient stochastic gradient descent method is not straightforward. For example, it is unclear how many transition trails of global system states are necessary to obtain an unbiased estimation of overall system cost's gradient. In the following section, an unbiased gradient estimation is derived based on a single trail of global system states. As a result, problem  $\textsf{P1}$ can be efficiently solved via the stochastic gradient descent method. 

\vspace{-0.1cm}
 \begin{Remark}
 The analytical models for back-off timer parameters in (\ref{eqn:policy_model}) are monotonically increasing with respect to the queue length. This is because larger queue length leads to larger chance of full buffer, and hence higher transmission priority. Moreover, mobile agents at different locations should have different transmission priorities, which is realized by adjusting the local scheduling parameters $ \left(\mathbf{b}, \bm{\lambda} \right) $. As a remark note that although a truncated linear model is used for back-off timer parameters in this paper, our method can also be applied on other models.
 	\vspace{-0.4cm}
\end{Remark}

\begin{figure*}
\begin{small}
	\begin{align}
	\hat{\mathsf{g}}^{\texttt{b}}_{\kappa,\ell}  \left( \mathbf{b}, \bm{\lambda}; \traj  \right ) 
	&\triangleq \left[ 
	\left (   \sum_{t \in \mathbb{N}} 
	\gamma^{t} \mathsf{c}_{\texttt{GS}}   
	\left(\state_{t}^{\mathbf{b}, \bm{\lambda}} \right)
	\right)
 \left(
	\sum_{t \in \mathbb{N}}
	\sum_{ k\in\setK} 
	\frac{ \partial \log\mathds{P}_{k}\left[\state_{t+1,k}^{\mathbf{b}, \bm{\lambda}} \Big| \state_{t}^{\mathbf{b}, \bm{\lambda}}  ,\mathbf{b}, \bm{\lambda} \right] }{ \partial  b_{\kappa,\ell} } \right) \right].  \label{eqn:grad_b}\\
\hat{	\mathsf{g}}^{\lambda}_{\kappa,\ell}  
\left( \mathbf{b}, \bm{\lambda}; \traj \right ) 
	&\triangleq  \left[ 
	\left (   \sum_{t \in \mathbb{N}}
	\gamma^{t} \mathsf{c}_{\texttt{GS}}  
		\left(\state_{t}^{\mathbf{b}, \bm{\lambda}} \right)
	\right)
\left(
	\sum_{t \in \mathbb{N}}
	\sum_{ k\in\setK}
	\frac{ \partial \log\mathds{P}_{k}\left[\state_{t+1,k}^{\mathbf{b}, \bm{\lambda}} \Big| \state_{t}^{\mathbf{b}, \bm{\lambda}}  ,\mathbf{b}, \bm{\lambda} \right] }{ \partial  {\lambda}_{\kappa,\ell} } \right) \right]. \label{eqn:grad_lambda}\\
\frac{ \partial \log\mathds{P}_{k}\left[\state_{t+1,k}^{\mathbf{b}, \bm{\lambda}} \Big| \state_{t}^{\mathbf{b}, \bm{\lambda}}  ,\mathbf{b}, \bm{\lambda} \right] }{ \partial  {b}_{\kappa,\ell} } 
&=\frac{{\upomega}_{t,k}^{(1)}-{\upomega}_{t,k}^{(2)}}{\upeta_{k}(\bm{\theta}(\mathbf{b}, \bm{\lambda}))\left({\upomega}_{t,k}^{(1)}-{\upomega}_{t,k}^{(2)}\right)+{\upomega}_{t,k}^{(2)}}
	\frac{ \partial  \upeta_{k}(\bm{\theta}(\mathbf{b}, \bm{\lambda}))}{ \partial  {b}_{\kappa,\ell}  }. \label{eqn:dlog/db} \\
	\frac{ \partial \log\mathds{P}_{k}\left[\state_{t+1,k}^{\mathbf{b}, \bm{\lambda}} \Big| \state_{t}^{\mathbf{b}, \bm{\lambda}}  ,\mathbf{b}, \bm{\lambda} \right] }{ \partial  {\lambda}_{\kappa,\ell} }  &=\frac{{\upomega}_{t,k}^{(1)}-{\upomega}_{t,k}^{(2)}}{\upeta_{k}(\bm{\theta}(\mathbf{b}, \bm{\lambda}))\left({\upomega}_{t,k}^{(1)}-{\upomega}_{t,k}^{(2)}\right)+{\upomega}_{t,k}^{(2)}}
	\frac{ \partial  \upeta_{k}(\bm{\theta}(\mathbf{b}, \bm{\lambda})  )}{ \partial  {\lambda}_{\kappa,\ell}  }. \label{eqn:dlog/dlambda}\\
	b_{\kappa,\ell}(m+1)&=b_{\kappa,\ell}(m)- \eta_{\texttt{LPG}}(m)
\hat{	\mathsf{g}}^{\texttt{b}}_{\kappa,\ell}  
	\left( \mathbf{b}(m), \bm{\lambda}(m); 
	\bm{\tau}^{\mathbf{b}(m), \bm{\lambda}(m) }_{\texttt{MC}  } 
	\right)
\label{eqn:update_b}. \\ 
	\lambda_{\kappa,\ell}(m+1)&={\lambda}_{\kappa,\ell}(m)- \eta_{\texttt{LPG}}(m)
	\hat{\mathsf{g}}^{\lambda}_{\kappa,\ell}  
	\left( \mathbf{b}(m), \bm{\lambda}(m); 
	\bm{\tau}^{\mathbf{b}(m), \bm{\lambda}(m) }_{\texttt{MC}  } 
	\right). \label{eqn:update_lambda}
	\end{align}
 \end{small}%
		\hrulefill
\end{figure*}

\section{Proposed SGD-BASED SOLUTION}
\label{Sec:Stochastic_Localized_Policy_Gradient_Descent}
Because the objective defined in \eqref{eqn:obj} traverses through all possible global system states $(\bm{\mathsf{S}}_{t})_{t\in \mathbb{N}}$,
the computation of gradients $	\frac{ \partial\bar{\mathsf{G}}(\state_{0}; \mathbf{b}, \bm{\lambda} )  }{ \partial b_{k,\ell}}$ and $	\frac{ \partial \bar{\mathsf{G}}(\state_{0}; \mathbf{b}, \bm{\lambda} )  }{ \partial {\lambda}_{k,\ell}}$ becomes prohibitive. In order to facilitate the optimization of $(\mathbf{b}, \bm{\lambda})$, unbiased estimation of these gradients is first proposed in this part, followed by a stochastic gradient descent method. 

Particularly, let  $
\traj \triangleq
\left(\state_{0}, \state_{1}^{\mathbf{b}, \bm{\lambda}}, 
\state_{2}^{\mathbf{b}, \bm{\lambda}},
\ldots \right)$ be a sequence of observed system states with the local scheduling parameters $(\mathbf{b}, \bm{\lambda})$, 
We have the following conclusion on the unbiased observation of the gradients. 
\begin{Lemma}\label{lem:Exp-PG}
Given the local scheduling parameters $(\mathbf{b}, \bm{\lambda})$,
$	\hat{\mathsf{g}}^{\texttt{b}}_{\kappa,\ell}  \left( \mathbf{b}, \bm{\lambda}; \traj \right ) $ and $\hat{	\mathsf{g}}^{\lambda}_{\kappa,\ell}  
\left( \mathbf{b}, \bm{\lambda}; \traj \right ) $ defined in \eqref{eqn:grad_b} and \eqref{eqn:grad_lambda} are the unbiased estimation of  $	\frac{ \partial\bar{\mathsf{G}}(\state_{0}; \mathbf{b}, \bm{\lambda} )  }{ \partial b_{\kappa,\ell}}$ and $	\frac{ \partial \bar{\mathsf{G}}(\state_{0}; \mathbf{b}, \bm{\lambda} )  }{ \partial {\lambda}_{\kappa,\ell}}$, $\forall \kappa,\ell$, respectively.

\end{Lemma}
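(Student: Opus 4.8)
The plan is to establish this as an instance of the likelihood-ratio (score function) policy-gradient identity, adapted to the infinite-horizon discounted, multi-agent setting with the state-transition factorization in \eqref{eqn:GobalStateTransitionProb}. First I would write the objective \eqref{eqn:obj} explicitly as a sum over trajectories. Since the state space $\mathcal{S}^{K}$ is finite but the horizon is infinite, I would begin by truncating at a finite horizon $T$, defining $\bar{\mathsf{G}}_{T}(\state_{0};\mathbf{b},\bm{\lambda}) \triangleq \mathds{E}[\sum_{t=0}^{T}\gamma^{t}\mathsf{c}_{\texttt{GS}}(\state_{t})\mid \state_{0},\mathbf{b},\bm{\lambda}]$, and expanding it as $\sum_{\state_{1},\ldots,\state_{T+1}}(\sum_{t=0}^{T}\gamma^{t}\mathsf{c}_{\texttt{GS}}(\state_{t}))\prod_{s=0}^{T}\mathds{P}[\state_{s+1}\mid\state_{s},\mathbf{b},\bm{\lambda}]$, with $\state_{0}$ held fixed.

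Next I would differentiate term-by-term with respect to $b_{\kappa,\ell}$ (the $\bm{\lambda}$-case being identical). Because the cost factor $\sum_{t}\gamma^{t}\mathsf{c}_{\texttt{GS}}(\state_{t})$ carries no parameter dependence, the derivative acts only on the product of transition probabilities, and the log-derivative identity $\partial_{b}\mathds{P}[\cdot]=\mathds{P}[\cdot]\,\partial_{b}\log\mathds{P}[\cdot]$ converts the weighted sum into an expectation over trajectories of the cost factor times a score factor. I would then invoke the factorization \eqref{eqn:GobalStateTransitionProb} to write $\log\mathds{P}[\state_{s+1}\mid\state_{s},\mathbf{b},\bm{\lambda}]=\sum_{k\in\setK}(\log\mathds{P}[\loc_{s+1,k}^{\texttt{A}}\mid\loc_{s,k}^{\texttt{A}}]+\log\mathds{P}[Q_{s+1,k}\mid\state_{s},\mathbf{b},\bm{\lambda}])$ and observe that the location-transition terms are independent of $(\mathbf{b},\bm{\lambda})$ and hence vanish under differentiation. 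This isolates precisely the per-agent queue-transition score $\sum_{s}\sum_{k\in\setK}\partial_{b_{\kappa,\ell}}\log\mathds{P}_{k}[\state_{s+1,k}\mid\state_{s},\mathbf{b},\bm{\lambda}]$ appearing in \eqref{eqn:grad_b}, establishing $\partial_{b_{\kappa,\ell}}\bar{\mathsf{G}}_{T}=\mathds{E}[\hat{\mathsf{g}}^{\texttt{b}}_{\kappa,\ell}(\mathbf{b},\bm{\lambda};\traj)]$ at each finite $T$; the explicit forms \eqref{eqn:dlog/db}--\eqref{eqn:dlog/dlambda} follow by a direct chain-rule differentiation of the queue kernel $\mathds{P}[Q_{s+1,k}\mid\state_{s},\mathbf{b},\bm{\lambda}]$ through the contention probability $\upeta_{k}(\bm{\theta}(\mathbf{b},\bm{\lambda}))$.

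The main obstacle is the passage $T\to\infty$: I must justify interchanging the limit with both differentiation and expectation, so that $\partial_{b_{\kappa,\ell}}\bar{\mathsf{G}}_{T}\to\partial_{b_{\kappa,\ell}}\bar{\mathsf{G}}$ and the infinite-horizon estimator retains a finite, convergent expectation. Here I would exploit three uniform bounds: the per-slot cost is bounded since $Q_{t,k}\le Q_{\texttt{max}}$, giving $\mathsf{c}_{\texttt{GS}}\le K(Q_{\texttt{max}}+w_{\texttt{B}})$; the discount $\gamma<1$ then bounds the cost factor uniformly by $K(Q_{\texttt{max}}+w_{\texttt{B}})/(1-\gamma)$; and the per-step scores are bounded because $\theta^{\texttt{A}}_{t,k}\in\mathfrak{I}=[\theta_{\texttt{min}},\theta_{\texttt{max}}]$ with $\theta_{\texttt{min}}>0$ keeps $\upeta_{k}$ and its parameter-derivatives away from degenerate values. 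The delicate point is that the score factor $\sum_{s}(\cdot)$ is undiscounted and need not converge pathwise; I would resolve this through the standard causality argument, noting that each per-step score has zero conditional mean, $\mathds{E}[\partial_{b}\log\mathds{P}_{k}[\state_{s+1,k}\mid\state_{s}]\mid\state_{s}]=0$, so every cross-term pairing a cost at slot $t\le s$ with a score at step $s$ vanishes in expectation, leaving only the geometrically discounted contributions with $t>s$. This renders the surviving double sum absolutely summable, and dominated convergence then delivers both the finite limit of the estimator's expectation and the validity of the limit interchange, completing the argument.
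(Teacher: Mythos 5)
Your route---truncate the horizon, expand the objective over trajectories, apply the log-derivative identity $\partial_{b}\mathds{P} = \mathds{P}\,\partial_{b}\log\mathds{P}$, use the factorization \eqref{eqn:GobalStateTransitionProb} to discard the parameter-free location kernels, and invoke the zero-conditional-mean property of the per-step scores---is the standard likelihood-ratio (score-function) argument on which the estimators \eqref{eqn:grad_b}--\eqref{eqn:dlog/dlambda} are patterned, and the finite-horizon part of your proof is sound: the identity $\partial_{b_{\kappa,\ell}}\bar{\mathsf{G}}_{T} = \mathds{E}[\hat{\mathsf{g}}_{T}]$ holds for every finite $T$, the cost is uniformly bounded by $K(Q_{\texttt{max}}+w_{\texttt{B}})$, the scores are uniformly bounded because $\theta^{\texttt{A}}_{t,k}\in[\theta_{\texttt{min}},\theta_{\texttt{max}}]$ with $\theta_{\texttt{min}}>0$ keeps $\upeta_{k}$ and the observed transition probabilities away from degenerate values, and the anticausal cross terms $\mathds{E}\bigl[\gamma^{t}\mathsf{c}_{\texttt{GS}}(\state_{t})\,\xi_{s}\bigr]$ with $s\ge t$ (where $\xi_{s}\triangleq\sum_{k\in\setK}\partial_{b_{\kappa,\ell}}\log\mathds{P}_{k}[\state_{s+1,k}\mid\state_{s},\mathbf{b},\bm{\lambda}]$) vanish by the tower property.

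The gap is in your final sentence. Dominated convergence cannot ``deliver the finite limit of the estimator's expectation,'' because the estimator with \emph{both} sums infinite has no pathwise limit to which DCT could apply: the undiscounted score sum $\sum_{s\le T}\xi_{s}$ is a martingale whose conditional variance is bounded below each time agent $\kappa$ visits location $\ell$ with a non-saturated projection, so its fluctuations grow like $\sqrt{T}$ and it diverges almost surely; multiplying by the convergent (and generically nonzero) discounted cost leaves a divergent product. For the same reason the double series $\sum_{t,s}\gamma^{t}\mathsf{c}_{\texttt{GS}}(\state_{t})\xi_{s}$ is not absolutely summable ($\sum_{t,s}\gamma^{t}CB=\infty$), so Fubini does not license discarding the infinitely many anticausal terms ``in expectation'' while keeping the causal part. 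The repair stays entirely within your own argument, but you must say what the infinite-sum estimator \emph{means}: either (i) interpret the Lemma as asymptotic unbiasedness of the truncated estimator, $\lim_{T\to\infty}\mathds{E}[\hat{\mathsf{g}}_{T}]=\partial_{b_{\kappa,\ell}}\bar{\mathsf{G}}$, which your finite-$T$ identity plus the uniform tail bound $\sum_{t>T}\gamma^{t}tCB\to 0$ already establishes (and which is what Algorithm \ref{Alg:SLPG} actually uses, since it truncates at $T_{\texttt{ep}}$); or (ii) define the infinite-horizon estimator through the causal ordering $\sum_{t}\gamma^{t}\mathsf{c}_{\texttt{GS}}(\state_{t})\sum_{s<t}\xi_{s}$, which by your bounds converges absolutely pathwise and in $L^{1}$, and whose expectation your argument then correctly identifies with the gradient. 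Without one of these two readings, the concluding interchange step, as stated, is not valid.
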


\begin{proof}
	Please refer to \cite{SM}.
		\vspace{-0.4cm}
\end{proof}

Moreover, the expression of  $\frac{ \partial \log\mathds{P}_{k}\left[\state_{t+1,k}^{\mathbf{b}, \bm{\lambda}} | \state_{t}^{\mathbf{b}, \bm{\lambda}} ,\mathbf{b}, \bm{\lambda} \right] }{ \partial {b}_{\kappa,\ell} }$ in \eqref{eqn:grad_b} and $\frac{ \partial \log\mathds{P}_{k}\left[\state_{t+1,k}^{\mathbf{b}, \bm{\lambda}} | \state_{t}^{\mathbf{b}, \bm{\lambda}} ,\mathbf{b}, \bm{\lambda}\right] }{ \partial {\lambda}_{\kappa,\ell} }$  in \eqref{eqn:grad_lambda} are provided in \eqref{eqn:dlog/db} and \eqref{eqn:dlog/dlambda}, respectively, and the expressions for ${\upomega}_{t,k}^{(1)}$, ${\upomega}_{t,k}^{(2)}$,  $ \frac{ \partial \upeta_{k}(\bm{\theta})}{ \partial b_{\kappa,\ell} }$, and $ \frac{ \partial \upeta_{k}(\bm{\theta})}{ \partial {\lambda}_{\kappa,\ell} }$ are provided in \cite{SM}.

As a result, the SGD method solving the problem $\textsf{P1}$ is outlined in Algorithm \ref{Alg:SLPG}. Finally, the following theorem establishes the convergence of Algorithm \ref{Alg:SLPG}.
\begin{Theorem}[Stochastic Convergence] \label{Thm:Convergence_SGD}
Let $\om \in \mathbb{R}_{+}$ be a constant such that
\begin{small}
\begin{align*}
\sum_{k=1}^{K}\sum_{\ell=1}^{|\mathcal{L}|} &\mathds{E}\left[ \left(\hat{\mathsf{g}}^{\texttt{b}}_{k,\ell}  
\left( \mathbf{b}, \bm{\lambda}; 
\traj
\right)\right)^2+
\left(
\hat{\mathsf{g}}^{\lambda}_{k,\ell} 
\left( \mathbf{b}, \bm{\lambda}; 
\traj
\right) \right)^2 \right]\leq \om,
\end{align*}
\end{small}%
we have
\begin{small}
\begin{align*}
&\mathds{E} \left[\min_{m\in \left\llbracket 0:M\right \rrbracket} \eta_{\texttt{LPG}}(m) \left\Vert \nabla \bar{\mathsf{G}} \left (\state_{0}; \mathbf{b}(m),\bm{\lambda}(m) \right)  \right\Vert ^{2}_{2} \right]\nonumber\\
&\leq \frac{1}{M+1}  \mathds{E} \left[ \bar{\mathsf{G}}\left(\state_{0}; \mathbf{b}(0),
\bm{\lambda}(0) \right)
-\bar{\mathsf{G}} \left(\state_{0}; \mathbf{b}^{\star}
, \bm{\lambda}^{\star}
\right) + \frac{\pi^2}{12} \om
\right].
\end{align*}
\end{small}%
\end{Theorem}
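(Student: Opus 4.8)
The plan is to treat Theorem~\ref{Thm:Convergence_SGD} as the standard descent-lemma analysis of stochastic gradient descent on a smooth nonconvex objective. The two ingredients that make it work are the $L$-smoothness of $\bar{\mathsf{G}}(\state_0;\cdot,\cdot)$ in the parameters $(\mathbf{b},\bm{\lambda})$ and the unbiasedness of the estimators established in Lemma~\ref{lem:Exp-PG}, sharpened by the second-moment bound $\om$. First I would write the per-coordinate updates \eqref{eqn:update_b}--\eqref{eqn:update_lambda} as a single vector recursion $(\mathbf{b}(m{+}1),\bm{\lambda}(m{+}1))=(\mathbf{b}(m),\bm{\lambda}(m))-\eta_{\texttt{LPG}}(m)\,\hat{\mathbf{g}}(m)$, where $\hat{\mathbf{g}}(m)$ stacks $\hat{\mathsf{g}}^{\texttt{b}}_{\kappa,\ell}$ and $\hat{\mathsf{g}}^{\lambda}_{\kappa,\ell}$ evaluated on the single trajectory $\bm{\tau}^{\mathbf{b}(m),\bm{\lambda}(m)}_{\texttt{MC}}$ drawn at step $m$.

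Next I would apply the $L$-smooth descent lemma to consecutive iterates and take the conditional expectation given the filtration $\mathcal{F}_m$ generated by $(\mathbf{b}(m),\bm{\lambda}(m))$. Because Lemma~\ref{lem:Exp-PG} gives $\mathds{E}[\hat{\mathbf{g}}(m)\mid\mathcal{F}_m]=\nabla\bar{\mathsf{G}}(\state_0;\mathbf{b}(m),\bm{\lambda}(m))$, the linear term collapses to $-\eta_{\texttt{LPG}}(m)\|\nabla\bar{\mathsf{G}}(m)\|_2^2$, while the quadratic term is controlled by the second-moment hypothesis, yielding the one-step inequality
\begin{small}
\begin{align*}
\eta_{\texttt{LPG}}(m)\,\mathds{E}\big[\|\nabla\bar{\mathsf{G}}(m)\|_2^2\big]\le \mathds{E}\big[\bar{\mathsf{G}}(m)\big]-\mathds{E}\big[\bar{\mathsf{G}}(m{+}1)\big]+\tfrac{L}{2}\,\eta_{\texttt{LPG}}(m)^2\,\om,
\end{align*}
\end{small}
where I abbreviate $\bar{\mathsf{G}}(m)\equiv\bar{\mathsf{G}}(\state_0;\mathbf{b}(m),\bm{\lambda}(m))$. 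Summing over $m\in\llbracket0:M\rrbracket$ telescopes the function values, and lower-bounding the final iterate by the optimum $\bar{\mathsf{G}}(\state_0;\mathbf{b}^\star,\bm{\lambda}^\star)$ leaves the residual variance term $\tfrac{L}{2}\om\sum_{m=0}^{M}\eta_{\texttt{LPG}}(m)^2$. With the diminishing schedule of Algorithm~\ref{Alg:SLPG}, which I expect to be $\eta_{\texttt{LPG}}(m)=1/(\sqrt{L}(m{+}1))$, one has $\tfrac{L}{2}\eta_{\texttt{LPG}}(m)^2=\tfrac{1}{2}(m{+}1)^{-2}$, so the Basel bound $\sum_{m\ge0}(m{+}1)^{-2}=\pi^2/6$ caps this residual at $\tfrac{\pi^2}{12}\om$. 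To finish, I would invoke $\min_m a_m\le\frac{1}{M+1}\sum_m a_m$ for the nonnegative summands $a_m=\eta_{\texttt{LPG}}(m)\|\nabla\bar{\mathsf{G}}(m)\|_2^2$ and move the expectation inside by linearity, which reproduces exactly the claimed bound.

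The hard part will be justifying the $L$-smoothness, since $\bar{\mathsf{G}}$ is an infinite discounted sum over a Markov chain whose kernel \eqref{eqn:GobalStateTransitionProb} depends on $(\mathbf{b},\bm{\lambda})$ only through the access probabilities $\upeta_k(\bm{\theta})$ and the clipping map $\proj_{\mathfrak{I}}$ in \eqref{eqn:policy_model}. The projection is piecewise linear, so it introduces kinks at the facets $b_{k,\ell}+\lambda_{k,\ell}Q=\theta_{\texttt{min}}$ and $=\theta_{\texttt{max}}$, which obstructs global twice-differentiability. I would handle this by restricting to each active region, where $\upeta_k(\bm{\theta})=\theta_k/\sum_{\kappa\in\setK}\theta_\kappa$ is a smooth rational function with denominator bounded below by $K\theta_{\texttt{min}}>0$; combined with the boundedness of $\mathsf{c}_{\texttt{GS}}$ (finite buffer $Q_{\texttt{max}}$) and the contraction $\gamma<1$, the discounted sum and its first two derivatives converge uniformly, so a finite uniform smoothness constant $L$ exists on the interior. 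Establishing this uniform $L$ (or positing it as a standing regularity assumption) is the crux; the remaining telescoping and series steps are routine.
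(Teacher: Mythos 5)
Your proposal is correct and follows essentially the same route as the paper's proof: the $\frac{1}{M+1}\bigl(\bar{\mathsf{G}}(\state_{0};\mathbf{b}(0),\bm{\lambda}(0))-\bar{\mathsf{G}}(\state_{0};\mathbf{b}^{\star},\bm{\lambda}^{\star})\bigr)$ term and the constant $\frac{\pi^2}{12}\om$ are exactly the fingerprints of the descent-lemma/telescoping argument with the unbiasedness from Lemma~\ref{lem:Exp-PG}, step size $\eta_{\texttt{LPG}}(m)=\frac{\eta_{\texttt{LPG}}(0)}{m+1}$ with $\eta_{\texttt{LPG}}(0)=1/\sqrt{L}$, and the Basel sum $\sum_{m\geq 0}(m+1)^{-2}=\pi^2/6$. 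Your further observation that $L$-smoothness of $\bar{\mathsf{G}}$ in $(\mathbf{b},\bm{\lambda})$ is a genuine unstated hypothesis --- delicate because the kinks of $\proj_{\mathfrak{I}}$ in \eqref{eqn:policy_model} obstruct global differentiability, so it must be imposed as a regularity assumption or argued away from the clipping boundaries --- is accurate, and the theorem implicitly relies on exactly this regularity together with that specific choice of $\eta_{\texttt{LPG}}(0)$.
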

\begin{proof}
	Please refer to \cite{SM}.
		\vspace{-0.3cm}
\end{proof}
As $M$ tends to infinity, the convergence of the proposed SGD-based Algorithm is assured by Theorem 1.

%We introduce a specific approach based on the actor-critic framework. Our approach leverages Markov chain Monte Carlo (MCMC) methods to estimate the discounted cumulative cost $ \sum_{t \in \mathbb{N}} \gamma^{t} \mathsf{c}_{\texttt{GS}}   \left( \state_{t}\right) $ associated with the tuple $\left(\mathbf{b}, \bm{\lambda} \right)$. Subsequently, we utilize the localized policy gradient in \eqref{eqn:grad_b} and \eqref{eqn:grad_lambda} for policy improvement. The pseudocode of our proposed algorithm is outlined in Algorithm \ref{Alg:SLPG}.

%Algorithm \ref{Alg:SLPG} follows a structured framework consisting of a for-loop from line $1$ to line $8$. Within this outer loop, an inner loop (line $3$ through line $5$) employs MCMC to estimate discounted cumulative cost $ \sum_{t \in \mathbb{N}} \gamma^{t} \mathsf{c}_{\texttt{GS}}   \left( \state_{t}\right) $ associated with the tuple $\left(\mathbf{b}, \bm{\lambda} \right)$. Subsequently, a policy gradient step is performed to improve the policy. This iterative process of cost estimation and policy improvement continues until convergence is achieved.

\begin{algorithm}

	\SetKwInOut{Input}{input}\SetKwInOut{Output}{output}
	
	\Input{ $(\mathbf{b}(0), \bm{\lambda}(0) )$; $T_{\texttt{ep}}$, length of each episode; step size parameter $\eta_{\texttt{LPG}}(0)$.   }
	\Output{$(\mathbf{b}(\infty), \bm{\lambda}(\infty) )$.}
	\BlankLine
	\For{$m=0,1,2,\dots$ }{
		
		Initial global state $\state_0 $, receive global system cost $\mathsf{c}_{\texttt{GS}}(\state_0)$, each agent $k$ takes action $\theta_{0,k}^{\texttt A}=\zeta_{k}\left( \state_{0,k} ;\mathbf{b}(m), \bm{\lambda}(m)  \right)$.

		\For{$t=1$ \KwTo $T_{\texttt{ep}}+1$}
		{\label{forins}
			
			Get global state $\state_{t}^{\mathbf{b}(m), \bm{\lambda}(m) }$ and global system cost $\mathsf{c}_{\texttt{GS}}\left(\state_{t}^{^{\mathbf{b}(m), \bm{\lambda}(m) }}\right)$, each agent $k$ takes action $\theta_{t,k}^{\texttt A}=\zeta_{k}\left( \state_{t,k}^{\mathbf{b}(m), \bm{\lambda}(m) } ;\mathbf{b}(m), \bm{\lambda}(m)  \right)$.

		}
		For any $\kappa\, \ell$, caculate gradients
		$\hat{\mathsf{g}}^{\texttt{b}}_{\kappa,\ell} \left ( \mathbf{b}(m), \bm{\lambda}(m); 
		\bm{\tau}^{\mathbf{b}(m), \bm{\lambda}(m) }_{\texttt{MC}  } 
		\right)$ as \eqref{eqn:grad_b}, and $	\hat{\mathsf{g}}^{\lambda}_{\kappa,\ell}  \left( \mathbf{b}(m), \bm{\lambda}(m); 	\bm{\tau}^{\mathbf{b}(m), \bm{\lambda}(m) }_{\texttt{MC}  }  \right) $ as \eqref{eqn:grad_lambda}

	For any $\kappa, \ell $, 	update parameters as \eqref{eqn:update_b} and \eqref{eqn:update_lambda}, where $\eta_{\texttt{LPG}}(m)= \frac{\eta_{\texttt{LPG}}(0)}{m+1}$.
	}
	\caption{SGD-Based Solution}\label{Alg:SLPG}
		
\end{algorithm}

\begin{table}\footnotesize%
	\centering
	\begin{tabular}{|c|c|c|}
		\hline
		\textbf{Parameter} & \textbf{Symbol} & \textbf{Value}\\
\hline		
  Number of mobile agents& $ K $ & 8 \\
  \hline
		Number of receive/transmit antenna  & $N_{\texttt{R}}$, $N_{\texttt{T}}$ & $64$, $64$ \\
		\hline
		Slot duration & $T_{\texttt{F}}$ & $3.008 \  \text{ms}$ \\
		\hline
		Max. /Min. back-off timer parameter  & ${\theta}_{\texttt{max}}$, ${\theta}_{\texttt{min}} $ & $1$, $\frac{1}{63}$ \\
		\hline
		Number of information bits in one packet  & ${R}_{\texttt{pac}}$ & $1 \ \text{M}$ \\
		\hline
		Discount factor & $\gamma$ & $0.95$\\
		\hline
		Max. buffer size in terms of packets & $Q_{\texttt{max}}$ & $10$  \\
		\hline
		Uplink power  & $P_{\texttt{UL}}$ & 
		$1\ \text{Watts}$\\
		\hline
	\end{tabular}
	\caption{Parameters of the simulation.}	\label{tab:parameter}
 	\vspace{-0.6cm}
\end{table}

\section{Simulations and Discussions}\label{Sec:Simulations}

In this section, the effectiveness of the proposed SGD method is illustrated by simulation. The main simulation parameters are summarized in Table \ref{tab:parameter}.
Similar to \cite{maltsev2010channel,maltsev2016channel}, we simulate the delay-sensitive communication in a $9$m × $7.5$m square room, divided into grids as depicted in Fig. \ref{fig:indoor_simulation}. The AP is at $(4.5, 0)$, and three static blockers with a radius of $0.5$m are centered at $(1.5, 2)$, $(4.5, 2)$, and $(7.5, 2)$ respectively. The indoor space is enclosed by four walls, and the specular reflection off the walls is considered in the mmWave channel. In each time slot, a mobile agent may either remain static or move to one of the four neighboring grids with equal probability.

In the simulation, the uplink channels are generated as follows. In each time slot, given the positions of the blockers, the AP, and the mobile agents, the propagation paths from the mobile agents to the AP are first determined. Their AoAs, AoDs, and average power gains (or losses) can then be calculated. We follow the pathloss and reflection loss model outlined in \cite{maltsev2010channel,maltsev2016channel}. Subsequently, the instantaneous path gains are generated using a complex Gaussian distribution.

The proposed algorithm's performance is compared with the following five baseline policies.

\begin{figure}[t]
	\centering
	\includegraphics[height=120pt,width=225pt]{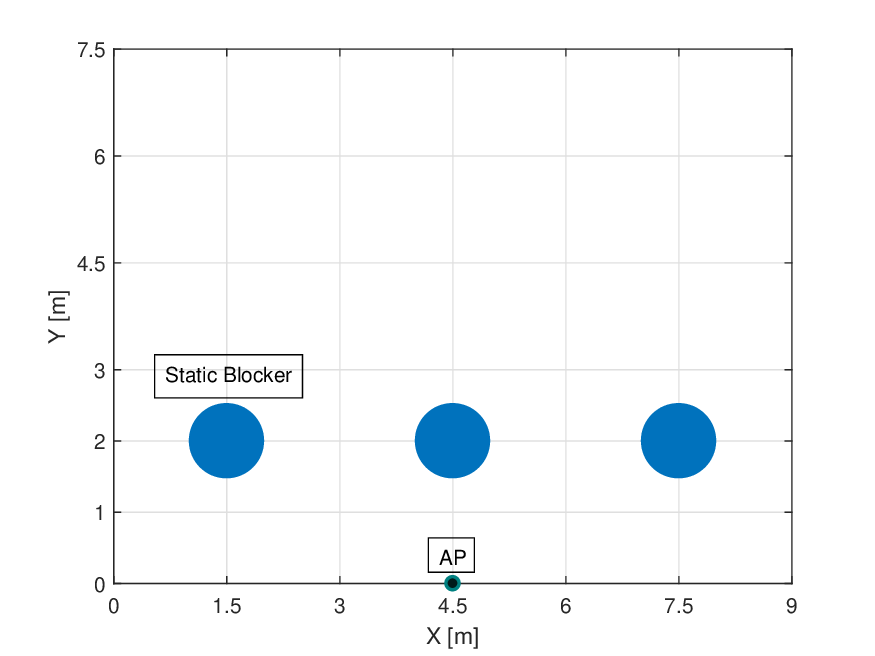}
	\caption{Indoor simulation scenario.}
	\label{fig:indoor_simulation}
 	\vspace{-0.6cm}
\end{figure}

\begin{figure}[t]
	\centering
	\includegraphics[height=120pt,width=225pt]{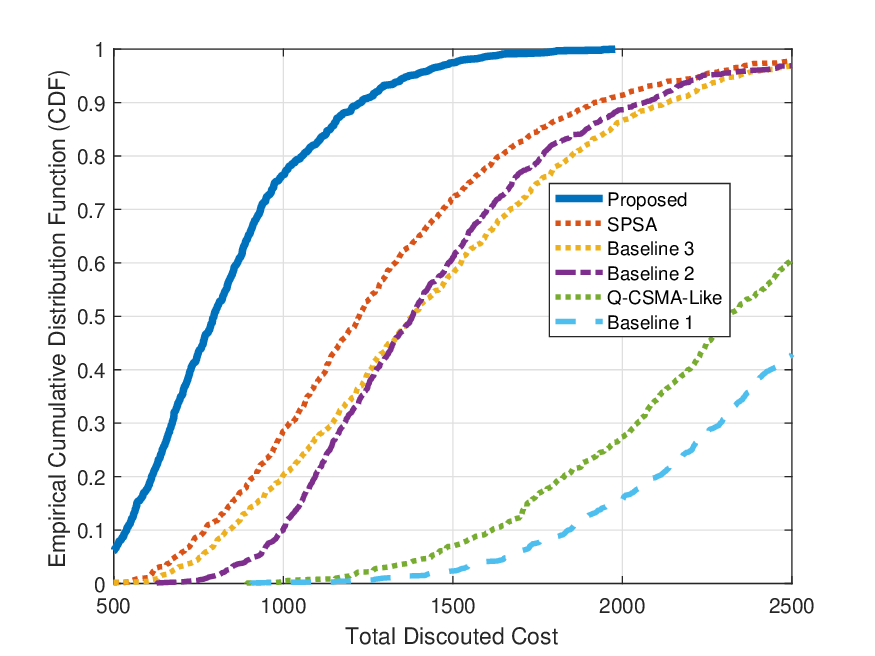}
	\caption{The empirical cumulative distribution function (CDF) of the total discounted cost in the indoor environment, where
	average arrival rate $\bar{A}_{k}=0.6$, $\forall k\in \setK$. }
	\label{fig:CDF_indoor}
	\vspace{-0.65cm}
\end{figure}

\begin{figure}[t]
	\centering
	\includegraphics[height=120pt,width=225pt]{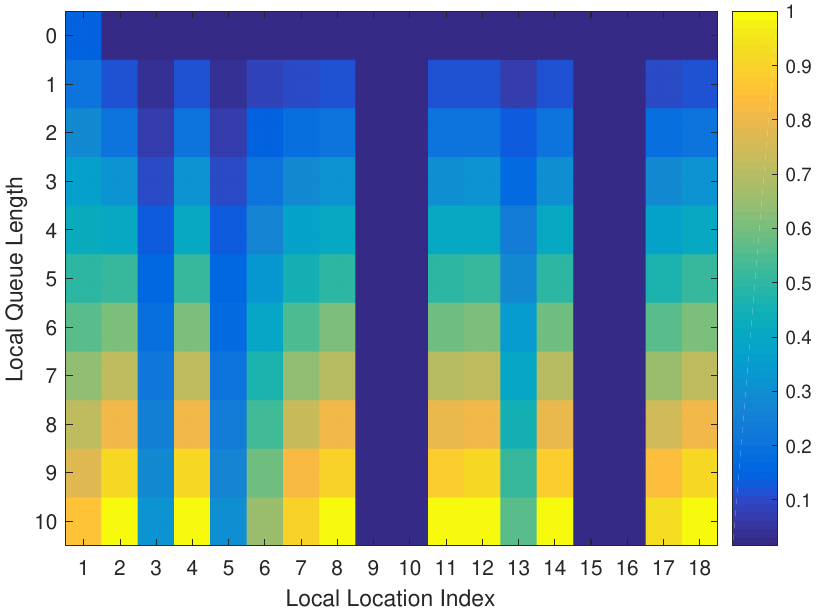}
	\caption{Visualization of an agent's proposed back-off timer parameters corresponding to all local states, where average arrival rate $\bar{A}_{k}=0.6$, $\forall k\in \setK$.}
	\label{fig:acton}
\vspace{-0.7cm}
\end{figure}

\begin{itemize}

\item {\bf Baseline 1:} 
Each agent chooses a constant back-off timer parameter, i.e.,
$
	    	\zeta_{k}(\state_{t,k})= \frac{1}{2} \left ( 
		\theta_{\texttt{min}}+ \theta_{\texttt{max}   }
		\right), \ \forall \state_{t,k} \in \mathcal{S}.
$

\item {\bf Baseline 2:} 
Each agent selects a back-off timer parameter based on its local queue length. If the uplink queue is full, the agent chooses $\theta_{\texttt{max}}$. $\theta_{\texttt{min}}$ is selected otherwise, i.e.,
\begin{align}
		    \zeta_{k}(\state_{t,k})=
		    \begin{cases}
	\theta_{\texttt{max}}, \quad  &Q_{t,k} = Q_{\texttt{max}} \\
	\theta_{\texttt{min}},\quad  &\textsf{others}
\end{cases}
\end{align}
		
\item {\bf Baseline 3:} Each agent chooses a back-off timer parameter based on the localized policy
\begin{align*}
     \zeta_{k} \left( \state_{t,k};\mathbf{b}_{k}(0),\bm{\lambda}_{k}(0) \right)=\proj_{\mathfrak{I}} \left[ b_{k,\ell}(0)+ \lambda_{k,\ell}(0) Q_{t,k}  \right],
\end{align*}
where ${b}_{k,\ell}(0)={\theta}_{\texttt{min}}$, ${\lambda}_{k,\ell}(0)=\frac{ {\theta}_{\texttt{max}}-
{\theta}_{\texttt{min}}}{Q_{\texttt{max}}}$ for any $k,\ell$.

\item {\bf Q-CSMA-Like:}
Similar to \cite{Srikant_2012ToN}, each agent selects a back-off timer parameter according to its local queue length, i.e.,
$
     \zeta_{k}(\state_{t,k})=\proj_{\mathcal{I}} \left[ \frac{\log (Q_{t,k})}{1+\log (Q_{t,k})} \right ].
$

\item {\bf{Simultaneous Perturbation Stochastic Approximation (SPSA)} \cite[Algorithm 17]{krishnamurthy2016partially} :}
Each agent selects a back-off timer parameter according to the localized policy
$
		\zeta_{k} \left( \state_{t,k};\mathbf{b}_{k}^{\dagger},\bm{\lambda}_{k}^{\dagger} \right)=\proj_{\mathfrak{I}} \left[ b_{k,\ell}^{\dagger}+ \lambda_{k,\ell} ^{\dagger}Q_{t,k}  \right],
$
where
$\left ( \mathbf{b}_{k}^{\dagger},\bm{\lambda}_{k}^{\dagger} \right )$ represents the outcome after optimization using SPSA.
\end{itemize}
		\vspace{-0.1cm}
	
Note that the initial parameters $\left( \mathbf{b}_{k}(0), \bm{\lambda}_{k}(0) \right)_{k \in \setK}$
for the proposed method and SPSA are identical to those used in Baseline 3.

First of all, the cumulative distribution functions (CDFs) of system cost are compared in  Fig. \ref{fig:CDF_indoor}. Significant improvement in our proposed method compared to the five baselines can be observed. The SPSA scheme also outperforms the other baselines. Both performance gains demonstrate the necessity of optimizing back-off timer parameters. The performance gain of the proposed method over the SPSA scheme illustrates the effectiveness of the proposed gradient estimation.

Figure \ref{fig:acton} visualizes an agent's proposed back-off timer parameters. The color intensity in the graph indicates the values of the selected back-off timer parameter $\theta_{t,k}^{\texttt{A}}$. A darker color represents a smaller $\theta_{t,k}^{\texttt{A}}$, and consequently a lower probability of uplink transmission. In areas with LoS obstruction (indexed as $3$, $4$, $9$, $10$, $15$, and $16$ in the x-axis), the agent chooses a small $\theta_{t,k}^{\texttt{A}}$ to relinquish uplink resources to other agents. This distributed approach ensures a rational resource allocation among multiple agents.

	\vspace{-0.2cm}
\section{Conclusions}\label{Sec:Conclusion}

This paper presents a decentralized multi-agent MDP framework for optimizing distributed and adaptive channel contention in mmWave-based systems. Specifically, mobile agents' local channel contention actions depend solely on their local statuses. The optimization of these local policies for all mobile agents is conducted in a centralized manner based on system statistics before scheduling. In our approach, the local policies are approximated using analytical models, and optimizing their parameters poses a stochastic optimization problem. Due to the prohibitive computational complexity of gradients, we introduce an SGD-based method and analyze its convergence rate for a specific step size. Our proposed solution integrates LoS blockage prediction into gradient estimation, allowing each slot's scheduling to account for future buffer overflow risks. Simulation results demonstrate that our scheme effectively reduces queuing delays and buffer overflow rates caused by LoS blockage, outperforming baseline approaches in performance

	\vspace{-0.4cm}
\bibliographystyle{IEEEtran}
\bibliography{mmWave_Decentralize}

% Generated by IEEEtran.bst, version: 1.14 (2015/08/26)
\begin{thebibliography}{10}
\providecommand{\url}[1]{#1}
\csname url@samestyle\endcsname
\providecommand{\newblock}{\relax}
\providecommand{\bibinfo}[2]{#2}
\providecommand{\BIBentrySTDinterwordspacing}{\spaceskip=0pt\relax}
\providecommand{\BIBentryALTinterwordstretchfactor}{4}
\providecommand{\BIBentryALTinterwordspacing}{\spaceskip=\fontdimen2\font plus
\BIBentryALTinterwordstretchfactor\fontdimen3\font minus
  \fontdimen4\font\relax}
\providecommand{\BIBforeignlanguage}[2]{{%
\expandafter\ifx\csname l@#1\endcsname\relax
\typeout{** WARNING: IEEEtran.bst: No hyphenation pattern has been}%
\typeout{** loaded for the language `#1'. Using the pattern for}%
\typeout{** the default language instead.}%
\else
\language=\csname l@#1\endcsname
\fi
#2}}
\providecommand{\BIBdecl}{\relax}
\BIBdecl

\bibitem{RuiWang2013}
R.~Wang \emph{et~al.}, ``Delay-aware two-hop cooperative relay communications
  via approximate {MDP} and stochastic learning,'' \emph{IEEE Trans. Inf.
  Theory}, vol.~59, no.~11, pp. 7645--7670, Nov. 2013.

\bibitem{RuiWang-2011TSP}
R.~{Wang}, V.~K.~N. {Lau}, and Y.~{Cui}, ``{Queue-Aware Distributive Resource
  Control for Delay-Sensitive Two-Hop MIMO Cooperative Systems},'' \emph{IEEE
  Trans. Signal Process.}, vol.~59, no.~1, pp. 341--350, Jan. 2011.

\bibitem{Neely2006TIT}
M.~Neely, ``Energy optimal control for time-varying wireless networks,''
  \emph{IEEE Trans. Inf. Theory}, vol.~52, no.~7, pp. 2915--2934, Jul. 2006.

\bibitem{Modiano_Infocom2023}
V.~Tripathi, N.~Jones, and E.~Modiano, ``{Fresh-CSMA: A Distributed Protocol
  for Minimizing Age of Information},'' in \emph{Proc. IEEE Int. Conf. Comput.
  Commun. (INFOCOM)}, May. 2023, pp. 1--10.

\bibitem{Walrand_2010_ToN}
L.~Jiang and J.~Walrand, ``{A Distributed CSMA Algorithm for Throughput and
  Utility Maximization in Wireless Networks},'' \emph{IEEE/ACM Trans. Netw.},
  vol.~18, no.~3, pp. 960--972, Jun. 2010.

\bibitem{Jiang2012-TIT}
L.~Jiang, M.~Leconte, J.~Ni, R.~Srikant, and J.~Walrand, ``{Fast Mixing of
  Parallel Glauber Dynamics and Low-Delay CSMA Scheduling},'' \emph{IEEE Trans.
  Inf. Theory}, vol.~58, no.~10, pp. 6541--6555, Oct. 2012.

\bibitem{Srikant_2012ToN}
J.~Ni, B.~Tan, and R.~Srikant, ``{Q-CSMA: Queue-Length-Based CSMA/CA Algorithms
  for Achieving Maximum Throughput and Low Delay in Wireless Networks},''
  \emph{IEEE/ACM Trans. Netw.}, vol.~20, no.~3, pp. 825--836, Jun. 2012.

\bibitem{li2022indoor}
Y.~Sun \emph{et~al.}, ``{An Indoor Environment Sensing and Localization System
  via mmWave Phased Array},'' \emph{J. Commun. Inf. Netw.}, vol.~7, no.~4, pp.
  383--393, Dec. 2022.

\bibitem{Xinyu_Zhang_NSDI_2017}
T.~Wei \emph{et~al.}, ``{Facilitating Robust 60 GHz Network Deployment By
  Sensing Ambient Reflectors},'' in \emph{Proc. USENIX Symp. Netw. Syst. Des.
  Implementation (NSDI)}, Mar. 2017, pp. 213--226.

\bibitem{yu2023mmalert}
C.~Yu, Y.~Sun, Y.~Luo, and R.~Wang, ``{mmAlert: mmWave Link Blockage Prediction
  via Passive Sensing},'' \emph{IEEE Wireless Commun. Letters}, vol.~12,
  no.~12, pp. 2008--2012, Dec. 2023.

\bibitem{Tsai_2020}
T.-T. Tsai, L.-H. Shen, C.-J. Chiu, and K.-T. Feng, ``{Beam AoD-based Indoor
  Positioning for 60 GHz MmWave System},'' in \emph{Proc. IEEE Vehicular
  Technology Conference (VTC2020-Fall)}, 2020, pp. 1--5.

\bibitem{Nicholson-Mobicom-08}
A.~J. Nicholson and B.~D. Noble, ``{BreadCrumbs: Forecasting Mobile
  Connectivity},'' in \emph{Proc. ACM Ann. Int. Conf. Mobile Comput. Netw.
  (MobiCom)}, 2008, p. 46–57.

\bibitem{Neely2006}
L.~Georgiadis \emph{et~al.}, ``Resource allocation and cross-layer control in
  wireless networks,'' \emph{Foundations and Trends in Networking}, Jan. 2006.

\bibitem{1961A}
J.~D.~C. Little, ``A proof for the queueing formula {L}= $\lambda$ {W},''
  \emph{Oper. Res.}, vol.~9, no.~3, pp. 383--387, Jun. 1961.

\bibitem{SM}
\BIBentryALTinterwordspacing
B.~Lv \emph{et~al.}, ``{Supplementary Material for Sensing-Assisted Adaptive
  Channel Contention for Mobile Delay-Sensitive Communications},'' 2024.
  [Online]. Available: \url{https://github.com/QianrenLi/sup-mmwave}
\BIBentrySTDinterwordspacing

\bibitem{maltsev2010channel}
A.~Maltsev \emph{et~al.}, ``Channel models for 60 {GHz} {WLAN} systems, doc.:
  {IEEE} 802.11-09/0334r8,'' \emph{IEEE 802.11 document 09/0334r8}, 2010.

\bibitem{maltsev2016channel}
A.~\vspace{0mm}Maltsev \emph{et~al.}, ``Channel models for {IEEE} 802.11 ay,
  document 802.11-15/1150r9,'' \emph{New York, NY, USA}, 2016.

\bibitem{krishnamurthy2016partially}
V.~Krishnamurthy, \emph{{Partially observed Markov decision processes}}.\hskip
  1em plus 0.5em minus 0.4em\relax Cambridge, U.K.: Cambridge Univ. Press,
  2016.

\end{thebibliography}
\end{document}